\newtheorem{lemma}{Lemma}
\newtheorem{example}{Example}
\DeclareMathOperator{\tr}{\mathrm{tr}}
\newcommand{\Mod}[1]{\ (\mathrm{mod}\ #1)}
\newcommand{\HS}{\mathcal{H}}
\newcommand{\mcb}[1]{\mathcal{B}_{#1}}
\newcommand{\DWC}{\mathcal{N}_{\mathrm{dwc}}}
\newcommand{\DWCs}[1]{\DWC \left( #1\right)}
\newcommand{\Qset}[1]{\mathbb{Q}_{#1}}
\newcommand{\V}[1]{\mathbf{#1}}
\newcommand{\M}[1]{\mathbf{#1}}
\newcommand{\dbar}[1]{\bar{\bar{#1}}}
\newcommand{\mgnt}[3]{f\left(\bar{\bar{#1}}; \, #2, #3 \right)}
\newcommand{\sumDWCS}[3]{\sum_{{\dbar{#1}:\atop  \mgnt{#1}{#2}{#3}= \ell }}} 
\newcommand{\sNorm}[1]{\left|{#1}\right|}
\DeclareMathOperator{\E}{\mathds{E}}
\newcommand{\EX}[1]{\E\left\{{#1}\right\}}
\newcommand{\we}[1]{\M{W}_{#1}}
\newcommand{\vNorm}[1]{\left\| #1\right\|}
\DeclareMathOperator{\var}{\mathds{V}\mathsf{ar}}
\newcommand{\Var}[1]{\var\left\{{#1}\right\}}
\definecolor{CCTLABgreen}{RGB}{0,128,0}
\begin{document}

\title{Entanglement-Free Parameter Estimation of Generalized Pauli Channels}

\author{Junaid ur Rehman}
\orcid{0000-0002-2933-8609}
\author{Hyundong Shin}
\orcid{0000-0003-3364-8084}
\affiliation{Department of Electronics and Information Convergence Engineering, Kyung Hee University, Korea}
\email{hshin@khu.ac.kr}
\thanks{(corresponding author)}
\maketitle

\begin{abstract}
We propose a parameter estimation protocol for generalized Pauli channels acting on $d$-dimensional Hilbert space. The salient features of the proposed method include product probe states and measurements, the number of measurement configurations linear in $d$, minimal post-processing, and the scaling of the mean square error comparable to that of the entanglement-based parameter estimation scheme for generalized Pauli channels. We also show that while measuring generalized Pauli operators the errors caused by the Pauli noise can be modeled as measurement errors. This makes it possible to utilize the measurement error mitigation framework to mitigate the errors caused by the generalized Pauli channels. We use this result to mitigate noise on the probe states and recover the scaling of the noiseless probes, except with a noise strength-dependent constant factor.
This method of modeling Pauli channel as measurement noise can also be of independent interest in other NISQ tasks, e.g., state tomography problems, variational quantum algorithms, and other channel estimation problems where Pauli measurements have the central role.
\end{abstract}

\section{Introduction}
Second quantum revolution has introduced a wide range of new quantum technologies. Quantum states and channels hold a central role in the efficient and successful implementation of all of these technologies. It is desirable to design our systems-of-interest as close to ideal behaviour as possible. However, environmental effects and nonidealities in designed components inevitably and irrecoverably introduce noise in these systems. A general method to model this noise in system components is through quantum channels. Ideally, one would aim for system components to be noiseless and error free, i.e., involved channels are identity channels.  However, it is almost impossible to design noiseless system components. The next best possible scenario is to have a complete knowledge of noise present in the system. That is, to know all the ways in which noise can corrupt the system and lead it to deviate from the intended behaviour. Having a complete knowledge of noise present in system components allows one to efficiently minimize the errors introduced by the noise \cite{CV:20:LNCS, FSW:08:IEEE_IT, XLM:12:CTW}.

Quantum process tomography is the method to identify an unknown quantum dynamical process \cite{CN:97:JMO, PCZ:97:PRL, OPG:04:PRL}. The general method of process tomography is to prepare probe states in different initial states, let them evolve through the quantum process of interest, and then measure the output states with different measurement settings \cite{DP:01:PRL, ML:07:PRA}. A measurement configuration is the specific setting of initial state of probes and measurement settings, i.e., changing the initial state of probe or the measurement setting gives a new measurement configuration. In general, the quantum process tomography is a resource-intensive and experimentally demanding process; standard quantum process tomography of a general quantum channel on $d$-dimensional Hilbert space requires $d^4$ measurement configurations. This stringent requirement of a large number of measurement configurations can be relaxed either by operating on a larger Hilbert space (entangled probes schemes) or by making reasonable assumptions on the channel structure based on the prior knowledge \cite{CDS:05:PRA, CS:16:IJQI, MRL:08:PRA}. Examples of the latter strategy include assumption of rank deficiency \cite{RVB:14:PRB} or modeling the unknown given channel as a parametric class of channels and then estimating the unknown parameters \cite{FCG:11:JPAMT, JWD:08:IEEE_IT, Fuj:04:PRA}. 

Examples of such parametric classes of channels include Pauli qubit channels and their higher-dimensional generalizations including discrete Weyl channels (DWCs) \cite{OP:09:AMH, SC:18:JMP}. Study of Pauli channels and their generalizations is well motivated by several important properties of this class. For example, it is known that \emph{every} unital qubit channel is similar to Pauli qubit channel \cite{PRZ:19:PLA}. Furthermore, several physically important classes of quantum channels are special cases of Pauli channels. Examples include depolarizing, dephasing, bit-flip, and two-Pauli channels. Furthermore, any noise model on a multiqubit system can be modeled as having the form of a Pauli channel \cite{Kni:05:Nat,  HFW:20:NP}. In recent times, some practical methods have been introduced that effectively approximate any noise model as the Pauli channel \cite{ESM:07:Sci, GZ:13:PRA, WE:16:PRA, HFW:20:NP, CXB:20:nQI} {e.g., by twirling via Pauli operators}. Unfortunately, some of the above motivations no longer remain true for the higher dimensional generalizations of Pauli channels \cite{DR:05:JPAMG}. Regardless, generalizations of Pauli channels remain an important and interesting topic of study in the theory of quantum information processing.

Due to their practical relevance and versatility, several researchers have studied the general and specific variants of Pauli channels to devise different strategies for estimating their parameters \cite{FI:03:JPAMG, Hay:10:JPACS, Hay:11:CMP, FCG:11:JPAMT, BHP:12:T_AC, RVH:12:JPAMT, VRH:13:AX, Col:13:PRA, CS:15:PRA, FW:20:ACMTQC,HYF:21:PRXQ, HFW:20:NP}. Of particular interest to us is the entanglement-assisted optimal parameter estimation (OPE) protocol presented in \cite{FI:03:JPAMG}, which is optimal in the sense of Cram\'{e}r-Rao bound, provides the best scaling of mean square error (MSE) in the number of channel uses, requires only a \emph{single} measurement configuration, and deals with the most general case of the generalized Pauli channels without any further assumptions. Experimental realization of this protocol for qubit Pauli channels was given in \cite{CRV:11:PRL}. However, experimental realization of this (and other entanglement-assisted) protocol becomes extremely challenging in the higher-dimensional cases due to difficulties involved in generating, maintaining, and processing higher-dimensional entangled states \cite{LRL:08:PRL, SLS:19:NJP}. 

In this paper, we present a protocol for the parameter estimation of DWCs, which can also be applied on the other generalizations of Pauli channels. The proposed protocol, called the direct parameter estimation of Pauli channels (DPEPC), is solely based on separable states but provides the same scaling of MSE as a function of channel uses as that of the OPE but with a multiplicative factor. Unfortunately, DPEPC requires more than a single measurement configurations. However, extensive numerical examples suggest that the required number of measurement configurations scales linearly with the dimension of the Hilbert space. Additionally, we show that in a system with Pauli measurements, errors caused by a Pauli channel can be efficiently modeled as measurement errors.  Then, the framework of measurement error mitigation can successfully mitigates these errors. We provide numerical examples of this error mitigation by introducing additional depolarizing noise on the probe states and then mitigating its effects by the aforementioned technique. This procedure recovers the original scaling of both DPEPC and OPE except with another noise strength-dependent multiplicative factor, if the noise strength is known. 


The remainder of this paper is organized as follows. In Section~\ref{sec:notations} we set the notations and preliminaries. Section~\ref{sec:DPT} and \ref{sec:NE} provide the protocol and numerical examples of DPEPC for the DWCs, respectively. In Section~\ref{sec:Conc}, we provide the conclusions and future outlook.

\section{Notations and Preliminaries} \label{sec:notations}
A DWC is a qudit generalization of qubit Pauli channels. The DWC acts on a quantum state $\rho$ as
\begin{equation}
	\DWCs{\rho} = \sum_{n = 0}^{d-1} \sum_{m = 0}^{d-1} p_{n,m} \we{n,m} \rho \we{n,m}^{\dagger},
	\label{eq:DWCdef}
\end{equation}
where 
\begin{align}
\we{n,m} = \sum_{k = 0}^{d-1} \omega^{kn} \ket{k} \bra{k + m\Mod{d}}, \quad 0 \leq n,m \leq d-1
\end{align} 
with $\omega = \exp\left(2\pi \dot{\iota}/d\right)$ are $d^2$ discrete Weyl operators on the $d$-dimensional Hilbert space $\HS_d$; $\left\{p_{n,m}\right\}$ form a probability vector and are called the channel parameters. Estimation of $\left\{p_{n,m}\right\}$ of an unknown given DWC is the main objective of the OPE and DPEPC. We denote the set of all Weyl operators on $\HS_d$ by $\mathcal{W}_d$.

For simplicity, we will also utilize a single index notation for discrete Weyl operators and the elements of probability vector of \eqref{eq:DWCdef}, where $\M{V}_{\dbar{k}} = \we{n,m}$ and $q_{\dbar{k}} = p_{n,m}$, with $\dbar{k} = n + md$. There exists an index-based relation between a Weyl operator $\we{a,b}$ and the eigenvectors of another Weyl operator $\we{n,m}$. The relationship was first presented by the authors in \cite{RJK:18:SR} and is formally given in Lemma~\ref{Lemma:1} of the current manuscript. Due to repetitive appearance of index relation $ma - nb \mod{d}$, we define it as $f\left( \dbar{k}; n,m\right)$ where it is understood that $\dbar{k}$ will first be decompressed to the double index notation to calculate $ma - nb \mod{d}$. In particular, $f\left( \dbar{k}; n,m\right) = 0$ if and only if $\we{a,b}$ and $\we{n,m}$ commute. We denote the orthonormal eigenbasis of $\we{n,m}$ by $\mcb{n,m}$. We also define $\Qset{d} = \left\{ 0,1, \cdots , d-1\right\}$.

\section{Direct Parameter Estimation of Pauli Channels}\label{sec:DPT}
{In this section, we outline our protocol for the parameter estimation of Pauli channels. The key idea is the equivalence of DWCs with classical symmetric channels under certain conditions \cite{RJK:18:SR}. By estimating the transition probabilities of emulated classical symmetric channels, we are able to reconstruct the full parameter set of the underlying DWCs. We also explore the quantum error mitigation for mitigating errors caused by noise in the probe states. }

\subsection{Proposed Protocol}
A DWC acts as a classical symmetric channel when the inputs to the channel are the elements of $\mcb{n,m}$, and the measurement at the output is a projective measurement in $\mcb{n,m}$. Then, the transition probabilities of the effective classical channels are given by the following lemma.
\begin{lemma}[\cite{RJK:18:SR}]\label{Lemma:1}
Let $\we{n,m}$ have $d$ distinct eigenvalues and its eigenstate $\ket{i_{n,m}}$ be input to a DWC. Then, the output state is diagonal in $\mcb{n,m}$ and its eigenvalues $\lambda_{\ell}^{n,m}$, $\ell \in \Qset{d}$ are given by
\begin{equation}
\lambda_{\ell}^{n,m} = \sumDWCS{k}{n}{m} q_{\dbar{k}}
\label{eq:lemma_closed}
\end{equation}
where $q_{\dbar{k}}$ are the parameters of the DWC.
\label{lemma:closed}
\end{lemma}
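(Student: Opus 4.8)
The plan is to exploit the fact that conjugating by any Weyl operator cyclically permutes the eigenbasis of $\we{n,m}$. The one algebraic ingredient needed is the Weyl commutation relation
\begin{equation*}
\we{n,m}\,\M{V}_{\dbar{k}} = \omega^{\,\mgnt{k}{n}{m}}\;\M{V}_{\dbar{k}}\,\we{n,m},
\end{equation*}
which one may either cite from \cite{RJK:18:SR} or derive in a line from $\we{n,m}=\sum_{k}\omega^{kn}\ket{k}\bra{k+m}$ by tracking how each factor shifts the computational basis and accumulates phases: decompressing $\dbar{k}$ to the pair $(a,b)$, the exponent of $\omega$ is $ma-nb$ reduced modulo $d$, so $\M{V}_{\dbar{k}}$ and $\we{n,m}$ commute exactly when $\mgnt{k}{n}{m}=0$, matching the convention fixed in Section~\ref{sec:notations}. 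I would also first record that $\we{n,m}$ is unitary with $\we{n,m}^{\,d}\propto\mathds{1}$, so its spectrum is a coset $\mu\{\omega^{\ell}:\ell\in\Qset{d}\}$ of $\langle\omega\rangle$; the hypothesis of $d$ distinct eigenvalues then forces every eigenspace to be one-dimensional, so $\mcb{n,m}$ is determined up to phases and we may label it $\{\ket{\ell_{n,m}}\}_{\ell\in\Qset{d}}$ so that $\we{n,m}\ket{\ell_{n,m}}=\mu\,\omega^{\ell}\ket{\ell_{n,m}}$.

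The core step is then immediate: for any eigenstate $\ket{i_{n,m}}$ and any $\dbar{k}$,
\begin{equation*}
\we{n,m}\bigl(\M{V}_{\dbar{k}}\ket{i_{n,m}}\bigr) = \omega^{\,\mgnt{k}{n}{m}}\,\M{V}_{\dbar{k}}\,\we{n,m}\ket{i_{n,m}} = \mu\,\omega^{\,i+\mgnt{k}{n}{m}}\;\M{V}_{\dbar{k}}\ket{i_{n,m}},
\end{equation*}
so $\M{V}_{\dbar{k}}\ket{i_{n,m}}$ is an eigenvector of $\we{n,m}$ for the eigenvalue $\mu\,\omega^{\,i+\mgnt{k}{n}{m}}$, hence by one-dimensionality of the eigenspaces it is a unit-modulus multiple of $\ket{(i+\mgnt{k}{n}{m})_{n,m}}$, and the rank-one term $\M{V}_{\dbar{k}}\ket{i_{n,m}}\bra{i_{n,m}}\M{V}_{\dbar{k}}^{\dagger}$ equals the projector $\ket{(i+\mgnt{k}{n}{m})_{n,m}}\bra{(i+\mgnt{k}{n}{m})_{n,m}}$, the phase cancelling. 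Substituting into \eqref{eq:DWCdef} and grouping the $d^{2}$ terms by the common value $\ell=\mgnt{k}{n}{m}\in\Qset{d}$ gives
\begin{equation*}
\DWCs{\ket{i_{n,m}}\bra{i_{n,m}}} = \sum_{\ell\in\Qset{d}}\Biggl(\;\sum_{\dbar{k}:\,\mgnt{k}{n}{m}=\ell} q_{\dbar{k}}\Biggr)\ket{(i+\ell)_{n,m}}\bra{(i+\ell)_{n,m}},
\end{equation*}
which is diagonal in $\mcb{n,m}$; the coefficients $\sum_{\dbar{k}:\,\mgnt{k}{n}{m}=\ell}q_{\dbar{k}}$ do not depend on the input index $i$ (they are precisely the transition probabilities $i\mapsto i+\ell$ of a classical symmetric channel), so reading them off as $\lambda_{\ell}^{n,m}$ yields \eqref{eq:lemma_closed}.

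I expect the actual work to be bookkeeping rather than ideas: pinning down the sign of the phase in the commutation relation so that it reproduces $ma-nb$ and not its negative, and committing to an indexing of $\mcb{n,m}$ consistent with \cite{RJK:18:SR} so that the output eigenstates appear exactly as $\ket{(i+\ell)_{n,m}}$. It is worth stressing that the ``$d$ distinct eigenvalues'' hypothesis is used essentially, not for convenience: without it the eigenspaces of $\we{n,m}$ could be higher-dimensional, $\M{V}_{\dbar{k}}\ket{i_{n,m}}$ would no longer be pinned to a single basis vector, and the output state could fail to be diagonal in $\mcb{n,m}$.
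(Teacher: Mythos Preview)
Your proof is correct. Note, however, that the paper does not actually prove Lemma~\ref{Lemma:1}: it is imported wholesale from \cite{RJK:18:SR} and simply stated, so there is no in-paper argument to compare against. Your route---using the Weyl commutation relation to show that each $\M{V}_{\dbar{k}}$ acts as a cyclic shift on the (nondegenerate) eigenbasis $\mcb{n,m}$, then grouping the Kraus terms by the shift value $\ell=\mgnt{k}{n}{m}$---is exactly the natural one and is consistent with the paper's own use of the identity $\we{n,m}\we{p,q}=\omega^{nq-mp}\we{p,q}\we{n,m}$ (stated in the algorithm description) and with the definition $f(\dbar{k};n,m)=ma-nb\bmod d$ from Section~\ref{sec:notations}. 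Your emphasis that the ``$d$ distinct eigenvalues'' hypothesis is what forces one-dimensional eigenspaces, and hence pins $\M{V}_{\dbar{k}}\ket{i_{n,m}}$ to a single basis vector up to phase, is the right place to spend a sentence; without it the diagonal-output claim genuinely fails.
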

In the context of the simulated classical channel, $\lambda_{\ell}^{n,m}$ is the probability of observing the output state $\ket{\left(i + \ell\right)_{n,m}}$ when the input state to the channel was $\ket{i_{n,m}}$. Due to the orthogonality of the elements of $\mcb{n,m}$ it is possible to obtain a direct estimate on $\lambda_{\ell}^{n,m}$, $\ell \in \Qset{d}$ by utilizing Lemma~\ref{lemma:closed}. Additionally, due to the independence of $\lambda_{\ell}^{n,m}$ from the index $i$ of the input state, the estimates on $\lambda_{\ell}^{n,m}$ for all $\ell$ are obtained simultaneously. That is, for any chosen $\ket{i_{n,m}}$ from $\mcb{n,m}$ and for any $\ell$, $\lambda_{\ell}^{n,m}$ is simply the fraction of times $\ket{\left(i + \ell\right)_{n,m}}$ is measured at the channel output. Therefore, one experiment configuration (fixed input and projective measurement in $\mcb{n,m}$) is sufficient to estimate the complete set of $d$ transition probabilities $\lambda_{\ell}^{n,m}$ for a fixed $\we{n, m}$.

\begin{figure}[t]
	\centering
	\includegraphics[width = 0.6\textwidth]{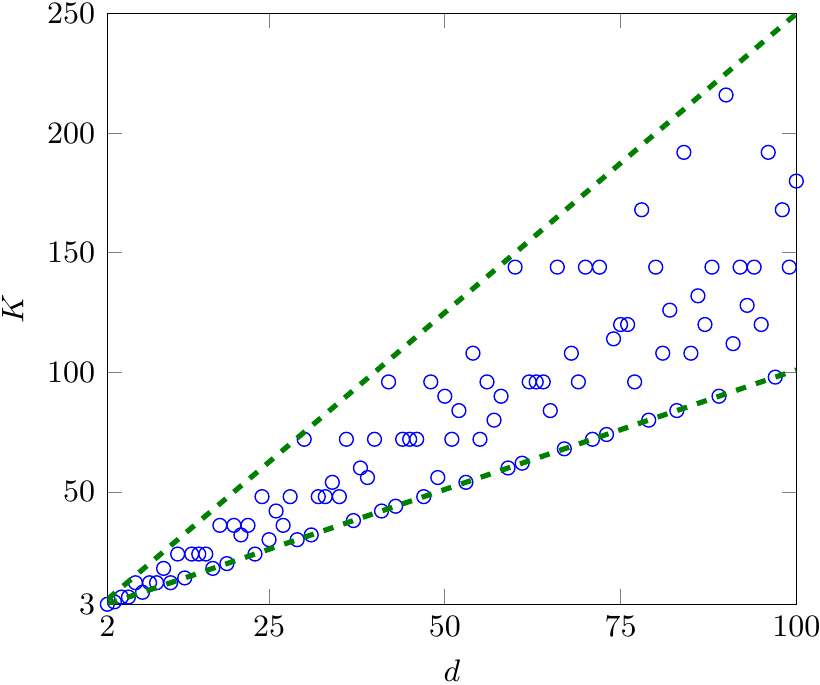}
	\caption{\textbf{Sufficient Measurements for DPEPC.} Number of measurement configurations $K$ vs the dimension $d$ of the Hilbert space for the DPEPC of DWCs. Lower and upper dashed green lines are at $d + 1$ and $d\times 2.5$, respectively.}
	\label{fig:Alg}
\end{figure}

For a fixed $\mcb{n,m}$, \eqref{eq:lemma_closed} provides a set of $d$ simultaneous equations which can be written in the matrix form $\M{A}^{n,m}\V{x} = \V{b}^{n,m}$, where $\V{b}^{n,m}$ (resp. $\V{x}$) is the $d \times 1$ (resp. $d^2 \times 1$) vector with $\lambda_{\ell}^{n,m}$ (resp. $p_{a,b}$ ) as its elements and $\M{A}^{n,m}$ is a $d \times d^2$ matrix with entries defined as
\begin{equation}
	A_{j,\dbar{k}} = \mathbb{I}_{j}\left( \mgnt{k}{n}{m}\right),
\end{equation}
where $\mathbb{I}_{j}\left( i\right)$ is the indicator function defined as
\begin{equation}
	\mathbb{I}_{j}\left( i\right) = 
	\begin{cases}
		1, \text{ if } i = j,\\
		0, \text{ otherwise.}
	\end{cases}
\end{equation}

Once we obtain the estimates on the elements of $\V{b}^{n,m}$, we can attempt to solve the set of equations $\M{A}^{n,m}\V{x} = \V{b}^{n,m}$ to obtain the channel parameters contained in $\V{x}$. However, in order to solve $\M{A}^{n,m}\V{x} = \V{b}^{n,m}$ for a unique $\V{x}$, we need the rank of $\M{A}^{n,m}$ to be $d^2$, which is impossible for our $d\times d^2$ matrix $\M{A}^{n,m}$. Since the summation \eqref{eq:lemma_closed} partitions the elements $q_{\dbar{k}}$ in $d$ disjoint sets of $d$ elements each such that the elements in each set contribute to a particular $\lambda_{\ell}^{n,m}$, the rows of $\M{A}^{n,m}$ are linearly independent. Thus, $\M{A}^{n,m}$ has rank $d$ for any $\we{n,m}$ that has $d$ distinct eigenvalues.

\begin{figure}[t]
	\centering
	\subfigure[~]{
		\includegraphics[width=0.7\textwidth]{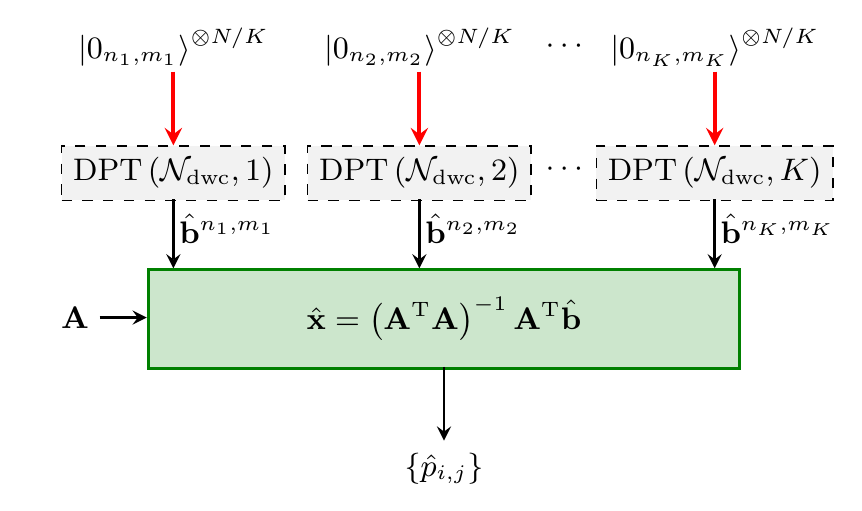}
	}
	~
	\subfigure[~]{
		\includegraphics[width=0.8\textwidth]{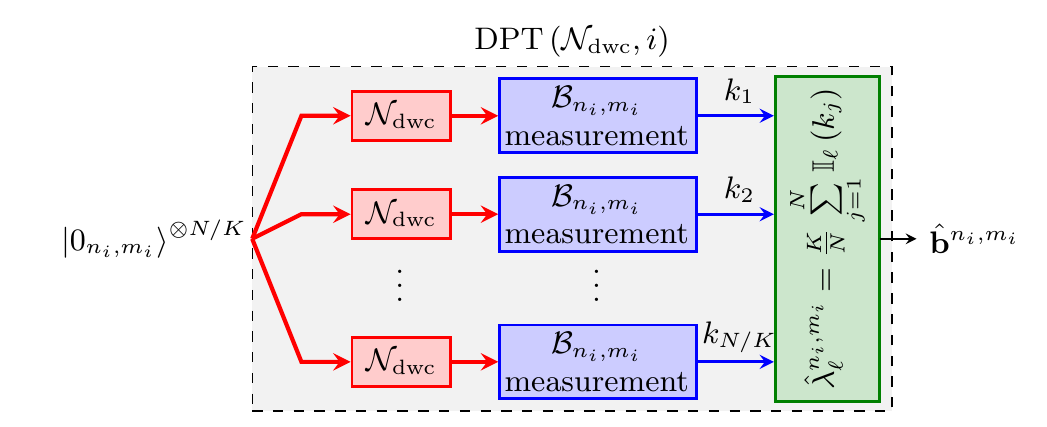}
	}
	\caption{\textbf{DPEPC protocol for DWCs.} (a) For each $i\in \Qset{K+1}\setminus \left\{0\right\}$ $N/K$ copies of each probe state $\ket{0_{n_i, m_i}}$ are input to the block $\mathrm{DPT}\left( \mathcal{N}_{\mathrm{dwc}}, i\right)$, which output estimates $\hat{\V{b}}^{n_i, m_i}$ of $\V{b}^{n_i, m_i}$. Estimates $\hat{p}_{i,j}$ on the channel parameters $p_{i,j}$ are obtained via the method of least squares by utilizing the estimates $\V{b}^{n_i, m_i}$. (b) Structure of the block $\mathrm{DPT}\left( \mathcal{N}_{\mathrm{dwc}}, i\right)$.}
	\label{fig:DPT}
\end{figure}

We can solve this problem of having smaller number of available simultaneous equations than the unknowns in the system by obtaining more equations for different $n,m$ values. That is, we invoke Lemma~\ref{lemma:closed} for $K$ different values of $n$ and $m$ to obtain at least $Kd$ equation in the matrix form
\begin{equation}
	\begin{bmatrix}
	\M{A}^{n_1,m_1} \\ \M{A}^{n_2,m_2} \\ \vdots \\ \M{A}^{n_K,m_K}
	\end{bmatrix}
	\V{x} = 
	\begin{bmatrix}
	\V{b}^{n_1,m_1} \\ \V{b}^{n_2,m_2} \\ \vdots \\ \V{b}^{n_K,m_K}
	\end{bmatrix}.
	\label{eq:A_big}
\end{equation}
We denote the matrix on the left hand side of \eqref{eq:A_big} by $\M{A}^d_{K}$, where the superscript denotes the dimension of the Hilbert space on which the channel operates and the subscript denotes the total number of \emph{non-commuting} $\we{n,m}$ using which Lemma~\ref{lemma:closed} was invoked.\footnote{We require different $\we{n,m}$ to be non-commuting because commuting $\we{n,m}$'s would result into same rows for $\M{A}^{n,m}$ but with different ordering.} The set of corresponding indices of Weyl operators utilized in generating $\M{A}^d_{K}$ is denoted by $\mathcal{W}_{\mathrm{idx}}$.

One would then hope that the system \eqref{eq:A_big} with $K = d$, would have a unique solution. However, we show that the matrix $\M{A}^{d}_{d}$ is still rank deficient for any $d$. First, note that all the elements of the row obtained by summing all rows of any $\M{A}^{n_k,m_k}$ will be 1. Then, one can obtain any row of any  $\M{A}^{n_{k'},m_{k'}}$ by simply subtracting all other rows of $\M{A}^{n_{k'},m_{k'}}$ from the row containing all 1's. Therefore, $\M{A}^{d}_{d}$ despite being of dimension $d^2\times d^2$ is still rank deficient. Therefore, the minimum $K$ such that $\M{A}^{d}_{K}$ has rank $d^2$ for any $d$ is at least $d+1$. In the following, we call an $\M{A}^{d}_{K}$ sufficient if it has rank $d^2$.

{Analytically }obtaining the exact value of the smallest $K$ for an arbitrary $d$ such that $\M{A}_K^d$ is sufficient is difficult. To overcome this difficulty, we algorithmically obtain $\mathcal{W}_{\mathrm{idx}}$.\footnote{The source code is available at \url{https://github.com/junaid572/DPEPC}.} 
{Verbal description of our algorithm is as follows. We first utilize the results from \cite{RJS:19:PRA} to calculate the total number of distinct eigenvalues of all discrete Weyl operators on $\HS_d$. We make a set $\mathcal{W}_d$ of all Weyl operators that have $d$ distinct eigenvalues. Then, we utilize the identity $\we{n, m} \we{p, q} = \omega^{nq - mp}\we{p, q}\we{n, m}$ to identify the commutation relations of operators within $\mathcal{W}_d$. We make subsets of $\mathcal{W}_d$ such that operators within each subset mutually commute. Finally, we obtain $\mathcal{W}_{\mathrm{idx}}$ by choosing one operator each from the commuting subsets of $\mathcal{W}_d$. We verify that $\mathcal{W}_{\mathrm{idx}}$ generates a sufficient $\M{A}^{d}_{K}$ by constructing the corresponding $\M{A}^{d}_{K}$  and verifying that it has rank $d^2$. } 

We used this algorithm for $d$ upto 100, which provides the following insights. For any $d$, an $\M{A}^{d}_{d^2 - 1}$ is always sufficient. That is, for a $d$-dimensional DWC, $d^2 - 1$ measurement configurations are always sufficient to perform the full process tomography. This number can be considerably reduced by utilizing the commutation relations of discrete Weyl operators. {We were able to obtain a sufficient $\M{A}^{d}_{K}$ for $K< d\times 2.5$ for any $\HS_d$ as large as $d = 100$}. Figure~\ref{fig:Alg} shows the required number of measurement configurations $K$ obtained via this algorithm. Furthermore, for any prime $d$, $\M{A}^{d}_{d+1}$ is sufficient. This latter observation is expected to hold beyond the values of $d$ which we numerically checked, since it is not possible to construct a set of more than $d+1$ noncommuting Weyl operators for a prime $d$ \cite{RJS:19:PRA}. Therefore, if $\M{A}^{d}_{d^2 - 1}$ being sufficient for any $d$ is always true, then the sufficiency of $\M{A}^{d}_{d + 1}$ for any prime $d$ also holds everywhere.

Obtaining a sufficient $\M{A}_K^d$ entails constructing the binary matrix $\M{A}_K^d$ as well as identifying the indices $n_i, m_i$, for $1\leq k \leq K$ of Weyl operators whose eigenstates will be utilized for the DPEPC of DWC. Once a sufficient $\M{A}_K^d$ is found for a $d$, the DPEPC of a DWC for $N$ channels uses can be performed as follows. Prepare $\left\lfloor N/K \right\rfloor$ copies of an eigenstate $\ket{s_{n_k, m_k}}$ of $\we{n_k, m_k}$ for every $1\leq k \leq K$ and send them through the channel $\mathcal{N}_{\mathrm{dwc}}$. For every $\ket{s_{n_k, m_k}}$ at input, measure the channel output in $\mcb{n_k,m_k}$ and record the measurement. Measurement outcomes provide an estimate $\hat{\lambda}_{\ell}^{n_k, m_k}$ for all $\lambda_{\ell}^{n_k, m_k}$. Construct the vector $\hat{\V{b}}_{K}^{d}$, which is an estimate on the vector on the right hand side of \eqref{eq:A_big}. Finally, obtain the estimates $\hat{p}_{i,j}$ on channel parameters $p_{i,j}$ by the method of least squares, i.e., 
\begin{equation}
	\hat{\V{x}} = 
		\left( \left(\M{A}_K^d\right)^{\mathrm{T}} \M{A}_K^d \right)^{-1} 
		\left(\M{A}_K^d\right)^{\mathrm{T}} 
		\hat{\V{b}}_{K}^{d},
		\label{eq:LSE}
\end{equation}
where $\left(\cdot\right)^{\mathrm{T}}$ and $\left(\cdot\right)^{-1}$ are the matrix transpose and the matrix inverse operations, respectively. {Note that the inverse in \eqref{eq:LSE} is only dependent on $d$ and the utilized measurement configurations, not on data. Thus,  after fixing $d$ and $K$, we can precompute 
\begin{align}
\M{B}^d_K = \left( \left(\M{A}_K^d\right)^{\mathrm{T}} \M{A}_K^d \right)^{-1} 
\left(\M{A}_K^d\right)^{\mathrm{T}}.
\label{eq:B_matrix}
\end{align}
 All subsequent runs of DPEPC can be completed simply by computing $\hat{\V{x}} = \M{B}^d_K \hat{\V{b}}_{K}^{d}$. In this sense, no matrix inversion is needed in DPEPC.}  Figure~\ref{fig:DPT} depicts the complete DPEPC protocol for DWCs.

It was shown in \cite{RJS:19:PRA} that variance in the estimates on the transition probabilities of Lemma~\ref{lemma:closed} scale with $1/N$, which is same as the scaling of OPE, except with a constant multiplicative factor $K$. We obtain the estimates $\hat{p}_{n,m}$ on the channel parameters $p_{n,m}$ by multiplying the estimates on transition probabilities with a matrix which is independent of $N$. Therefore, we obtain the same scaling in the estimates of $\hat{p}_{n,m}$, i.e., $K/N$.

\begin{figure}[t]
	\centering
	\includegraphics[width = 0.7\textwidth]{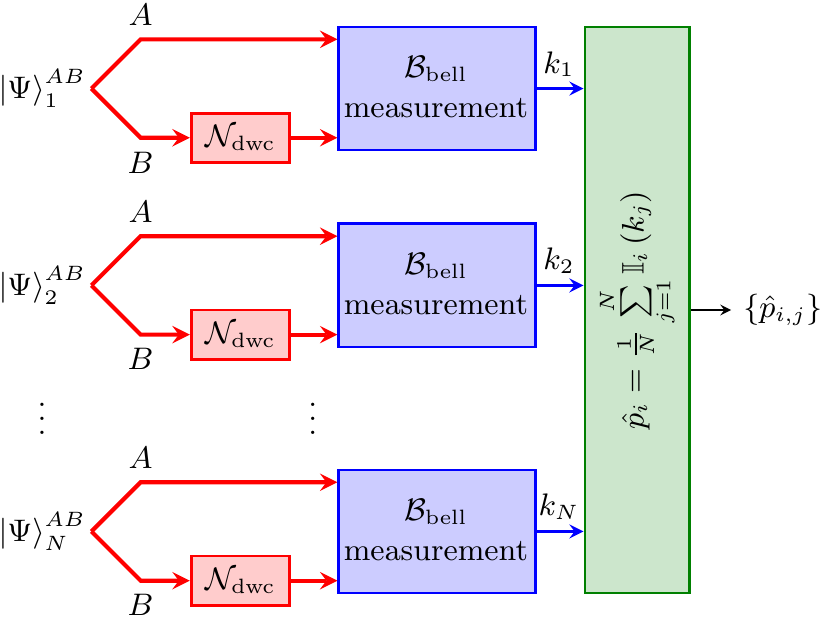}
	\caption{\textbf{OPE protocol of \cite{FI:03:JPAMG}.} $N$ copies of probe state $\ket{\Psi}^{AB}$ are prepared, where $\ket{\Psi}^{AB}\in \HS_d \otimes \HS_d$ is the two-qudit maximally entangled state. One of the qudits is allowed to evolve under $\mathcal{N}_{\mathrm{dwc}}$ and subsequently a joint $\mathcal{B}_{\mathrm{bell}}$ measurement on both qudits in perfomed. Finally, the vector of probabilities is $\hat{p}_{i,j}$ is estimated from the measurement statistics. }
	\label{fig:OPE}
\end{figure}

Before moving to the numerical examples and comparison section, we provide an expository example of DPEPC for $d=2$ DWC. This example not only serves the purpose of exposition but also highlights the salient features of the DPEPC for DWCs.

\begin{example}[DPEPC for the qubit DWC]
	We have 
	\begin{equation}
		\M{A}^{2}_{3} 
		= 
		\begin{bmatrix}
			\M{A}^{0,1}\\
			\M{A}^{1,0}\\
			\M{A}^{1,1}
		\end{bmatrix}
		=
		\begin{bmatrix}
			1 & 1 & 0 & 0\\
			0 & 0 & 1 & 1\\
			1 & 0 & 1 & 0\\
			0 & 1 & 0 & 1\\
			1 & 0 & 0 & 1\\
			0 & 1 & 1 & 0
		\end{bmatrix},
	\end{equation}
$\V{x} = \left[ p_{0,0} \ p_{0,1} \  p_{1,0} \ p_{1,1}\right]^{\mathrm{T}}$, and $\V{b}^{2}_{3} = \left[ \lambda_0^{0,1} \ \lambda_1^{0,1} \ \lambda_0^{1,0} \ \lambda_1^{1,0} \ \lambda_0^{1,1} \ \lambda_1^{1,1}\right]^{\mathrm{T}}$. Three probe states are 
\begin{equation}
\begin{aligned}
	\ket{0_{0,1}} &= \ket{+} = 1/\sqrt{2}\left(\ket{0} + \ket{1}\right) \\ 
	\ket{0_{1,0}} &= \ket{0}, \text{ and }\\
	\ket{0_{1,1}} &= \ket{+\dot{\iota}} = 1/\sqrt{2}\left(\ket{0} + \dot{\iota}\ket{1}\right).
\end{aligned}
\end{equation}
The corresponding measurement settings are the projective measurements in $\mcb{n_i, m_i}$. The estimate $\hat{\lambda}_{\ell}^{n_i, m_i}$ is the relative frequency of outcome $\ket{\ell_{n_i, m_i}}$ when $\ket{0_{n_i, m_i}}$ was input to the channel. Finally, an estimate on the channel parameters is obtained via \eqref{eq:LSE}.
\label{Example:qubitDPT}
\end{example}

\begin{figure}[t!]
	\centering
	\includegraphics[width = 0.7\textwidth]{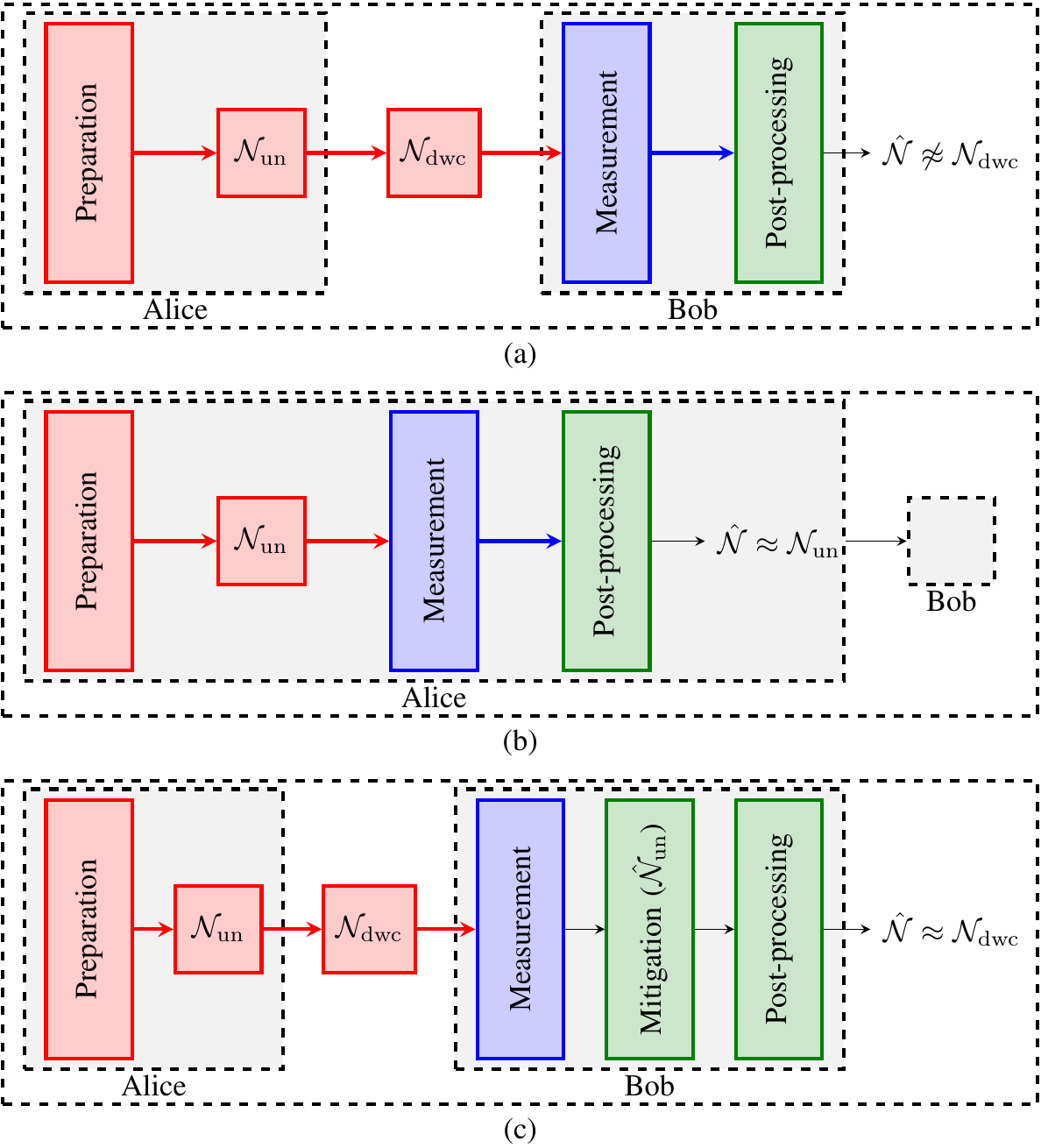}
	\caption{\textbf{DPEPC with Noise}. A practical scenario with unintended noise $\mathcal{N}_{\mathrm{un}}$ on the probe states is depicted in (a). Errors caused by this noise can be mitigated if Alice locally estimates this noise (b), and then Bob applies quantum error mitigation after measurement in a regular run (c). We utilize the measurement error mitigation in this configuration. }
	\label{fig:Pract_Scene}
\end{figure}

We stress that once a sufficient $\M{A}^d_K$ is constructed for a given $d$, that $\M{A}^d_K$ can be utilized for all the subsequent DPEPC experiments for all DWCs operating on $\HS_d$. This also fixes the measurement configurations and the pseudo-inverse of $\M{A}^d_K$ appearing on the right side of \eqref{eq:LSE} for this $d$. Therefore, the DPEPC for DWCs does not involve experiment design, matrix inversion, or optimization of any kind. The DPEPC protocol for DWCs is then to simply perform measurements in $K$ pre-defined measurement configurations and plug-in the frequencies of measurement results in $\V{b}^d_K$ to directly obtain the channel parameters $\V{x}$.

\subsection{Quantum Error Mitigation for DPEPC}
The proposed protocol in the previous subsection relies on the ability to sufficiently isolate the prepared probe states such that the only noisy evolution they go through is the noisy channel under study. However, this isolation might not be possible in practice. An unintended noisy evolution might occur anywhere from preparation to the final measurement. Such a scenario is shown in Figure~\ref{fig:Pract_Scene} where an unintended noise $\mathcal{N}_{\mathrm{un}}$ may corrupt the probe states. In the following, we show that the errors caused by this unintended noise can be mitigated if it is of Pauli form. Specifically, we show that the framework of measurement error mitigation \cite{MZO:20:Quant} can be utilized to mitigate the errors cause by generalized Pauli channels. 

Let us first assume that $\mathcal{N}_{\mathrm{un}}$ is also of Pauli form. Then, we have the following convenient result. 
\begin{align}
		\mathcal{N}_{\mathrm{un}}\left( \rho \right) = \sum_{r = 0}^{d - 1} \sum_{s = 0}^{d-1} q_{r, s}W_{r, s}\rho W_{r, s}^{\dagger}.
\end{align}
Then $\mathcal{N}_{\mathrm{un}}$ commutes with $	\mathcal{N}_{\mathrm{dwc}}$. 

\begin{lemma}
	Let $\mathcal{N}_1$ and $\mathcal{N}_2$ be Pauli channels of the form \eqref{eq:DWCdef} with parameter set $\left\{p_{n,m}\right\}$ and $\left\{q_{r, s}\right\}$, respectively. Then for any quantum state $\rho$, $\mathcal{N}_1\circ \mathcal{N}_2 \left( \rho \right)= \mathcal{N}_2 \circ\mathcal{N}_1\left( \rho \right)$, where $\circ$ denotes the serial concatenation of two quantum channels.
	\label{lemma:commutation}
\end{lemma}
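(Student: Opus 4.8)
The plan is to prove the stated pointwise identity directly, by unwinding the composition into a double sum over discrete Weyl operators and using the single fact that any two Weyl operators commute up to a scalar phase, which then cancels against its own conjugate in the conjugation pattern $W\rho W^{\dagger}$. No spectral or structural facts about Pauli channels beyond \eqref{eq:DWCdef} and the Weyl commutation identity are needed.

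First I would expand, using linearity of both maps,
\begin{align}
\mathcal{N}_1\circ\mathcal{N}_2(\rho) = \sum_{n,m}\sum_{r,s} p_{n,m}\, q_{r,s}\, \we{n,m}\we{r,s}\,\rho\,\we{r,s}^{\dagger}\we{n,m}^{\dagger}.
\end{align}
The key algebraic input is the commutation identity already recalled in the text, $\we{n,m}\we{r,s} = \omega^{ns-mr}\,\we{r,s}\we{n,m}$. Crucially, one should not reuse this identity on the right of $\rho$ but instead take its adjoint: $\we{r,s}^{\dagger}\we{n,m}^{\dagger} = \bigl(\we{n,m}\we{r,s}\bigr)^{\dagger} = \omega^{-(ns-mr)}\,\we{n,m}^{\dagger}\we{r,s}^{\dagger}$, which produces exactly the inverse phase.

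Substituting both expressions into the summand, the scalar $\omega^{ns-mr}$ appearing to the left of $\rho$ and the scalar $\omega^{-(ns-mr)}$ appearing to the right multiply to $1$, so for every pair of index pairs $\we{n,m}\we{r,s}\,\rho\,\we{r,s}^{\dagger}\we{n,m}^{\dagger} = \we{r,s}\we{n,m}\,\rho\,\we{n,m}^{\dagger}\we{r,s}^{\dagger}$. Hence
\begin{align}
\mathcal{N}_1\circ\mathcal{N}_2(\rho) = \sum_{r,s}\sum_{n,m} q_{r,s}\, p_{n,m}\, \we{r,s}\we{n,m}\,\rho\,\we{n,m}^{\dagger}\we{r,s}^{\dagger} = \mathcal{N}_2\circ\mathcal{N}_1(\rho),
\end{align}
the last equality being a relabelling of the two summation indices. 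Since $\rho$ is arbitrary, $\mathcal{N}_1\circ\mathcal{N}_2 = \mathcal{N}_2\circ\mathcal{N}_1$ as channels.

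There is essentially no hard step; the only point requiring care is the phase bookkeeping, namely recognizing that the adjoint of the commutation relation contributes the complex conjugate (equivalently the inverse) of the phase picked up on the other side of $\rho$, which is precisely what forces the cancellation. I would also add a one-line remark that this cancellation makes each Kraus term $W\rho W^{\dagger}$ insensitive to the usual $U(1)$ phase ambiguity in defining the Weyl operators, and that the argument immediately iterates, so any finite number of Pauli channels of the form \eqref{eq:DWCdef} compose in an order-independent way — which is the property actually invoked when commuting $\mathcal{N}_{\mathrm{un}}$ past $\mathcal{N}_{\mathrm{dwc}}$.
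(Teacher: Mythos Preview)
Your proof is correct and follows essentially the same approach as the paper: expand the composition as a double sum over Weyl conjugations, apply the Weyl commutation relation on each side of $\rho$, and observe that the phase and its conjugate cancel so the summand is symmetric in the two index pairs. The paper's proof is slightly terser about the phase cancellation, but the argument is the same.
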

\begin{proof}
We can show this as follows
\begin{align}
		\mathcal{N}_{\mathrm{dwc}}\left(\mathcal{N}_{\mathrm{un}}\left( \rho \right)\right) 
		&= \sum_{n = 0}^{d - 1}\sum_{m = 0}^{d - 1} p_{n, m}
		W_{n,m}
			\left(\sum_{r = 0}^{d - 1} \sum_{s = 0}^{d-1} q_{r, s}W_{r, s}\rho W_{r, s}^{\dagger}\right)
		W_{n,m }^{\dagger} \nonumber \\
		&=
		\sum_{n = 0}^{d - 1}\sum_{m = 0}^{d - 1} \sum_{r = 0}^{d - 1} \sum_{s = 0}^{d-1}
		p_{n, m}q_{r, s}
		W_{n,m}
		 W_{r, s}\rho W_{r, s}^{\dagger}
		W_{n,m }^{\dagger}
		\label{eq:bef_comm}\\
		&=
		\sum_{r = 0}^{d - 1} \sum_{s = 0}^{d-1}\sum_{n = 0}^{d - 1}\sum_{m = 0}^{d - 1} 
		q_{r, s}p_{n, m}
		W_{r, s}
		W_{n,m}\rho W_{n,m }^{\dagger}
		W_{r, s}^{\dagger}
		\label{eq:aft_comm}	\\
		&= \sum_{r = 0}^{d - 1} \sum_{s = 0}^{d-1} q_{r, s}
		W_{r,s}
		\left(
		\sum_{n = 0}^{d - 1}\sum_{m = 0}^{d - 1} p_{n, m}W_{n, m}\rho W_{n, m}^{\dagger}
		\right)
		W_{r,s }^{\dagger}\nonumber \\
		&= \mathcal{N}_{\mathrm{un}}\left(\mathcal{N}_{\mathrm{dwc}}\left( \rho \right)\right). \nonumber 
\end{align}
Moving from \eqref{eq:bef_comm} to \eqref{eq:aft_comm}, we changed the order of summation, the order of product of $q_{r, s}$ and $p_{n, m}$, and also the order of product of Weyl operators by utilizing the commutation relation $\we{n,m}\we{r, s} = \omega^{rm - sn}\we{r, s}\we{n, m}$.
\end{proof}

The commutation of these two noisy channels allows use to model noise anywhere in the protocol by a single noisy process $\mathcal{N}_{\mathrm{un}}$ as long as its overall form is of a Pauli channel. Furthermore, for ease in the analysis we can move $\mathcal{N}_{\mathrm{un}}$ to any point in the protocol before measurement. However, it makes more sense to assume that $\mathcal{N}_{\mathrm{un}}$ acts only on the probe states before leaving the Alice's laboratory. That is because all noisy evolution after leaving Alice's laboratory and before being measured by Bob is actually the noisy channel between Alice and Bob. 

Then, Alice can execute the DPEPC locally in her laboratory to estimate the parameters of $\mathcal{N}_{\mathrm{un}}$ and send this information classically to Bob, who can utilize the measurement error mitigation framework as described below. 

Errors caused by a faulty measurement device are termed as measurement errors and are characterized by a column stochastic matrix $\Gamma$ \cite{MZO:20:Quant}. Let us assume that we apply a projective measurement characterized by a set of projectors $\left\{\Pi_i\right\}_i$ on a quantum state $\rho$. The ideal probabilities of measurement outcomes are given by a probability vector $P^{\mathrm{ideal}}$ whose $i$th element is $p_{i} = \tr\left(\Pi_i \rho\right)$. On the other hand, the probabilities of measurement outcome from a noisy measurement device characterized by $\Gamma$ are given by a probability vector given by $P^{\mathrm{noisy}} = \Gamma P^{\mathrm{ideal}}$. If the noise in the measurement device is known, i.e., if $\Gamma$ is known, an estimate of ideal probabilities of measurement outcomes can be obtained from noisy measurement results by $P^{\mathrm{ideal}} = \Gamma^{-1} P^{\mathrm{noisy}}$ \cite{MZO:20:Quant}.

By the virtue of Lemma~\ref{lemma:commutation}, we can assume $\mathcal{N}_{\mathrm{un}}$ to act just before the measurement. Let $\rho$ be the state before $\mathcal{N}_{\mathrm{un}}$ and the final measurement be in $\mcb{n, m}$. We can decompose $\rho$ in $\mcb{n,m}$ as
\begin{align}
	\rho = \sum_{i, j} \alpha_{i, j}\ket{i_{n, m}}\bra{j_{n, m}} = \sum_{i}\alpha_{i,i}\ket{i_{n, m}}\bra{i_{n, m}} + \sum_{\substack{i, j\\ i \neq j} } \alpha_{i, j}\ket{i_{n, m}}\bra{j_{n, m}},
\end{align}
where $\alpha_{i, i}$ is the probability of obtaining the measurement outcome corresponding to $\ket{i_{n, m}}$ when measuring $\rho$. We can write the state after $\mathcal{N}_{\mathrm{un}}$ as
\begin{align}
	\mathcal{N}\left( \rho\right) = \sum_{i, j} \alpha_{i, j}\mathcal{N}\left( \ket{i_{n, m}}\bra{j_{n, m}}\right) = \sum_{i}\alpha_{i,i}\mathcal{N}\left( \ket{i_{n, m}}\bra{i_{n, m}}\right) + \sum_{\substack{i, j\\ i \neq j} } \alpha_{i, j}\mathcal{N}\left( \ket{i_{n, m}}\bra{j_{n, m}}\right),
\end{align}
where we have dropped the subscript of $\mathcal{N}_{\mathrm{un}}$ for simplicity. 

The nondiagonal part of $\rho$ in the basis $\mcb{n, m}$, i.e., $ \sum_{\substack{i, j\\ i \neq j} } \alpha_{i, j}\ket{i_{n, m}}\bra{j_{n, m}}$ remains nondiagonal after the application of $\mathcal{N}_{\mathrm{un}}$ and does not contribute in the final measurement in $\mcb{n, m}$ \cite{RJK:18:SR}. Therefore, we only need to consider the $\mathcal{N}\left( \ket{i_{n, m}}\bra{i_{n, m}}\right)$ terms. Furthermore, the effect of DWC on the eigenstates of any $\we{n, m}$ followed by a measurement in $\mcb{n, m}$ can be modeled by a classical symmetric channel  \cite{RJK:18:SR}, which in turn is characterized by a doubly stochastic matrix. Therefore, 
\begin{align}
	\sum_{i}\alpha_{i,i}\mathcal{N}\left( \ket{i_{n, m}}\bra{i_{n, m}}\right)  = \sum_{i}\beta_{i,i} \ket{i_{n, m}}\bra{i_{n, m}}, 
\end{align}
where $\vec{\beta} = \Lambda_{n, m} \vec{\alpha}$, with $\vec{\alpha}, \vec{\beta},$ and $\Lambda_{n, m}$ as the vector of $\alpha_{i,i}$'s, vector of $\beta_{i, i}$'s, and the doubly stochastic matrix characterizing the effect of classical symmetric channel induced on $\mcb{n, m}$.

In case of a \emph{noiseless measurement device}, but the presence of  channel noise $\mathcal{N}_{\mathrm{un}}$ of Pauli form, we record the measurement  probabilities $\vec{\beta}$. However, if $\mathcal{N}_{\mathrm{un}}$ is known, we can simply estimate the ideal probabilities $\vec{\alpha} = \Lambda_{n, m}^{-1} \vec{\beta}$.

In case of a noisy measurement device as well as the presence of channel noise $\mathcal{N}_{\mathrm{un}}$ of Pauli form, we record the measurement probabilities $\vec{\gamma} = \Gamma \vec{\beta} = \Gamma \Lambda_{n,m} \vec{\alpha}$. Since the product of a left stochastic and a doubly stochastic matrix is another left stochastic matrix, we are still operating in the framework of measurement errors, and can perform the error mitigation as easily by inverting the matrix. Therefore, we can mitigate the errors caused by the noisy measurement device as well as from Pauli noise in the system in a unified manner.

Before moving to the numerical examples section, we remark that the only assumptions we made are the channel noise to be of Pauli form and the final measurement to be in the eigenbasis of some Pauli operator. These assumptions are not too demanding given the general nature of Pauli channels and the importance of Pauli measurements. Examples include the current protocol, quantum state tomography tasks \cite{TNW:02:PRA, GLF:10:PRL}, variational quantum algorithms \cite{PMS:14:NC}, and other quantum information processing tasks \cite{FL:11:PRL}, where Pauli measurements have the central role. Therefore, this modeling of Pauli noise in the framework of measurement errors and measurement error mitigation can be of independent interest beyond the protocol at hand. 

\section{Numerical Examples}\label{sec:NE}

In this Section we provide numerical examples of DPEPC and compare its performance with the entanglement-based optimal parameter estimation (OPE) method of \cite{FI:03:JPAMG} shown in Figure~\ref{fig:OPE}. The channel parameters were the eigenvalues of the $d^2 \times d^2$ exponential correlation matrix \cite{SW:08:T_IT}
\begin{equation}
\Phi \left( \gamma\right) 
= \frac{1}{d^2}
\left[ \gamma^{\sNorm{i - j}}\right]_{0 \leq i,j \leq d^2 - 1}.
\label{eq:corrMatrix}
\end{equation}
We recall that $\gamma = 0$ in \eqref{eq:corrMatrix} gives completely depolarizing (highly noisy) channel, and $\gamma = 1$ gives an ideal (noiseless) DWC. Furthermore, increasing $\gamma$ makes the channel parameters more ordered in terms of majorization, giving less noisy channels \cite{RJS:19:PRA, SW:08:T_IT}. Also, we assume the unintended channel to be the depolarizing channel parameterized by a real parameter $\kappa$, where $\kappa = 0, 1$ corresponds to the noiseless and the fully depolarizing channels, respectively.

\begin{figure*}[t]
	\centering
	\subfigure[$d = 5$]{
		\includegraphics[width=0.47\textwidth]{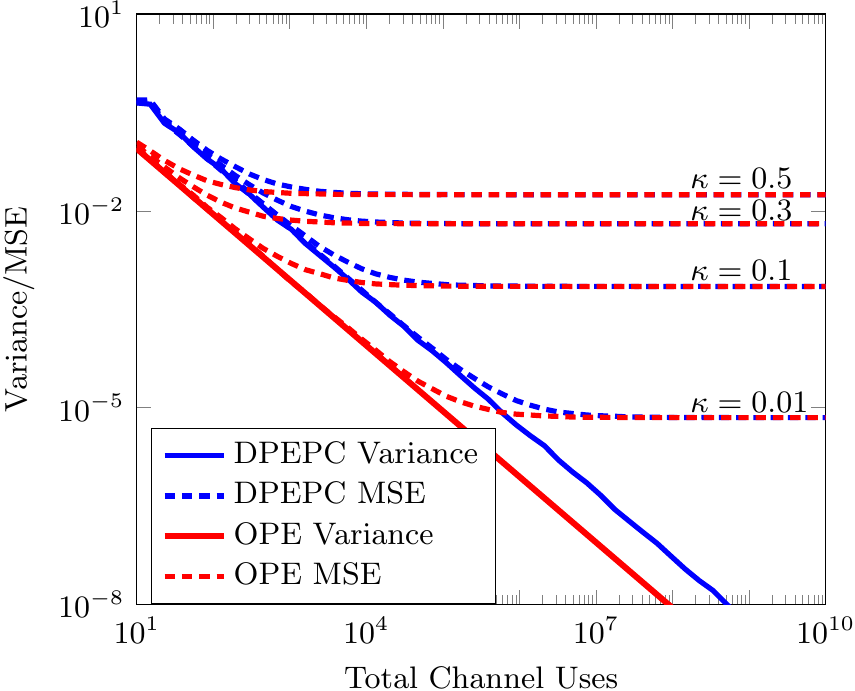}
	}
	\subfigure[$d = 6$]{
		\includegraphics[width=0.47\textwidth]{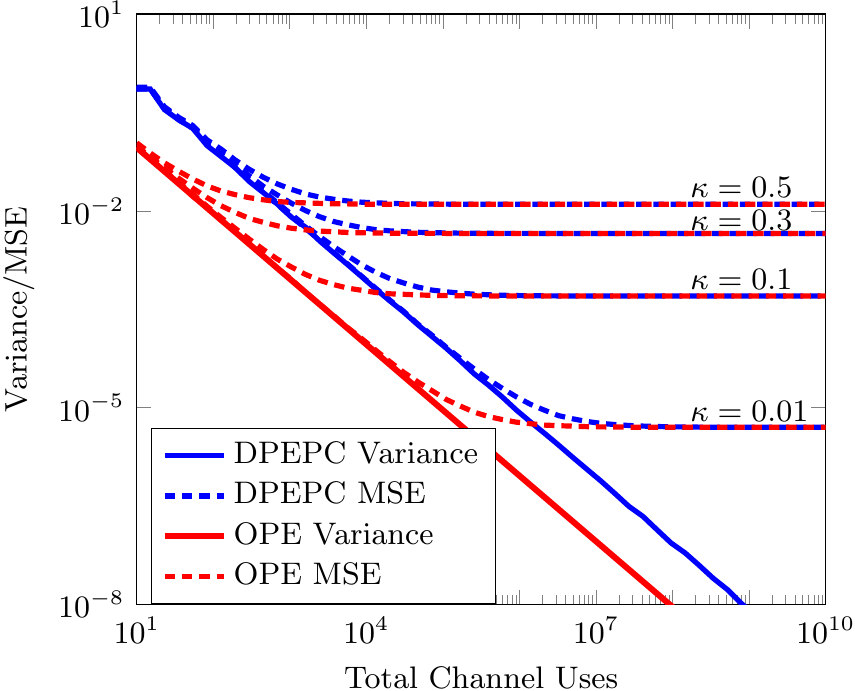}
	}
	\subfigure[$d = 7$]{
		\includegraphics[width=0.47\textwidth]{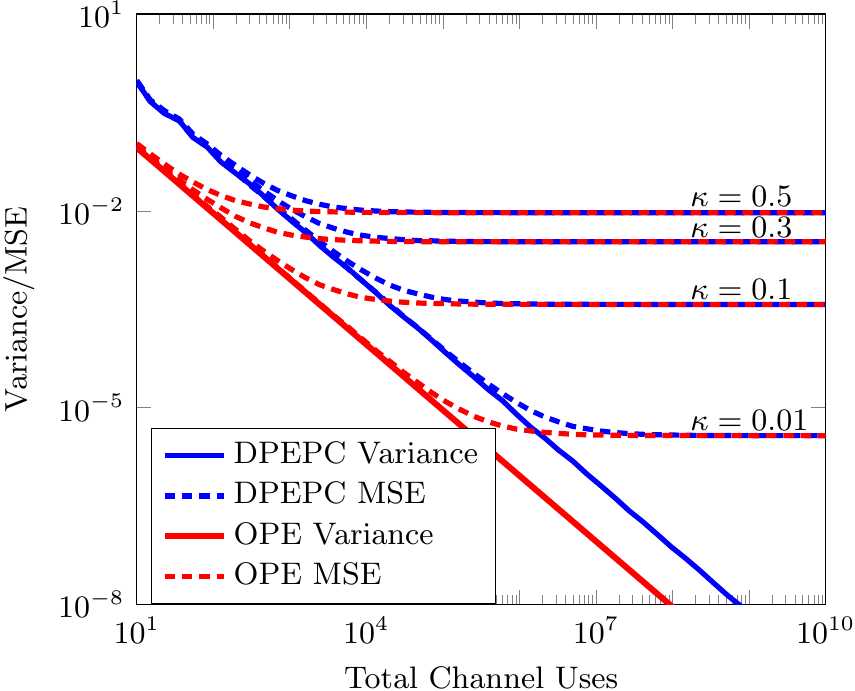}
	}
	\subfigure[$d = 8$]{
		\includegraphics[width=0.47\textwidth]{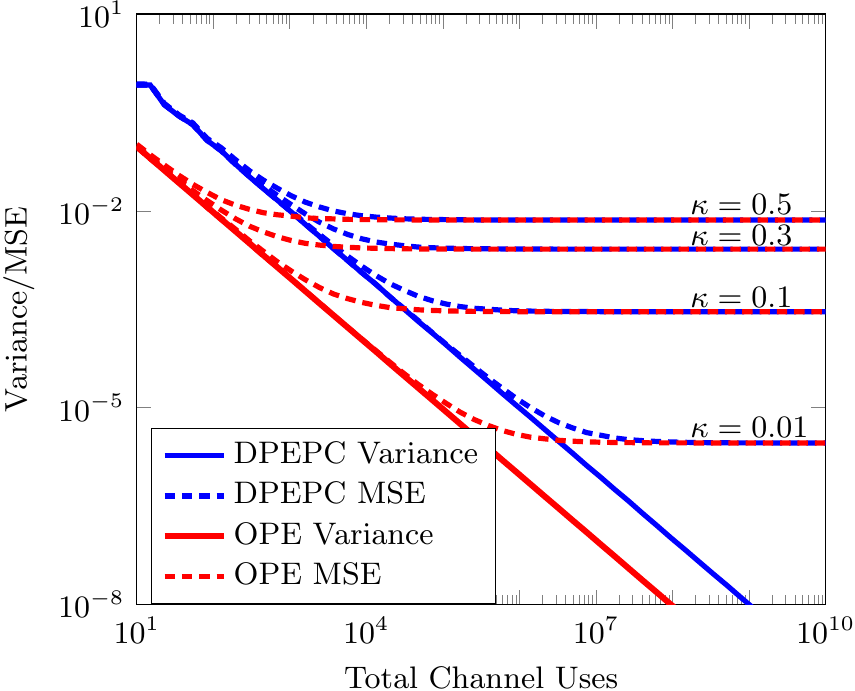}
	}
	\caption{\textbf{Performance comparison of DPEPC (proposed) and that of OPE (\cite{FI:03:JPAMG}).} We plot the number of channel uses vs the estimation accuracy for the DPEPC and the OPE of DWC with $\gamma = 0.7$, (a) $d = 5, K = 6$, (b) $d = 6, K = 12$, (c) $d = 7, K = 8$, and (d) $d = 8, K = 12$, with different values of the noise strength $\kappa$ in the probe states. {Both the variance and the MSE are summed over all $\hat{p}_{n,m}$.} We do not plot the MSE for $\kappa = 0$ for both the DPEPC and the OPE since they are exactly same as their respective variances. }
	\label{fig:noisyDPT}
\end{figure*}

{The performance metrics we use in our numerical examples are the variance and the mean square error (MSE) of the estimates. A natural performance metric for process tomography, channel estimation, and channel distinguishing problems is the diamond norm distance. However, we provide the results in the main text in terms of variance and the MSE because of the following reasons. 1) We are dealing with a parametric class of channels where the channel structure is fixed. Then, the problem essentially boils down to a parameter estimation problem. In these problems variance and the MSE are more natural performance metrics. 2) Together, MSE and the variance of the estimates provide more information. For example, We can easily observe a bias in our estimates if the MSE and the variance are not equal. For interested readers, we also provide all our numerical results in terms of diamond norm distance in the Appendix of this paper. }

Figure~\ref{fig:noisyDPT} shows the performance comparison of DPEPC and OPE of DWCs for $d = 5, 6, 7,$ and 8, and $\gamma = 0.7$ with different noise strengths $\kappa$. We plot the variance/MSE against the number of channel uses $N$. Blue (resp. red) Solid lines show the variance of DPEPC (resp. OPE), which is same for the MSE for the noiseless ($\kappa = 0$) case. {These values of variance and MSE are summed over all parameters $p_{n,m}$ of the channels.} Note that the two variance lines are parallel, depicting same scaling in variance $1/N$ as a function of number of channel uses ($N$). The separation between the two lines is the multiplicative factor in the scaling and is a function of $K$, the number of measurement configurations we need to uniquely identify all channel parameters. {This separation also shows the tradeoff between entanglement-assisted and entanglement-free schemes. By avoiding the use of entanglement in our scheme for the sake of experimental feasibility, we need to utilize more experimental configurations and perform our experiment more number of times (by a constant factor, independent of $N$) to obtain the same performance. }

{The performance of OPE for different values of $d$ looks very similar in Figure~\ref{fig:noisyDPT}. This seems counter intuitive but can be easily explained as follows. The measurement outcomes in the OPE follow a multinomial distribution where the probability of obtaining measurement outcome $i$ is $p_i$.\footnote{We use a single index here for the ease of notation.} Let $X_i$ be the random variable characterizing the number of times event $i$ is observed in $N$ trials. Then, the variance $\Var{X_i} = N p_i \left( 1 - p_i\right)$. Since we use the maximum likelihood estimator, i.e., $\hat{p}_i = X_i/N$, its variance is $\Var{\hat{p}_i} = \Var{X_i/N} = p_i \left( 1 - p_i\right)/N$. Summed variance of these estimators is $V = \sum_i p_i \left( 1 - p_i\right)/N$, which is \emph{independent} of $d$. Its dependence on the distribution is through $V_{\mathrm{e}} = \sum_i p_i \left( 1 - p_i\right)$, which changes very little for similarly generated distributions. For example, for the considered examples $V_\mathrm{e} = 0.960, 0.972, 0.980, 0.984$  for $d = 5, 6, 7,$ and 8, respectively. Uniform sampling from the probability simplex also gives similar values, i.e.,  $V_{\mathrm{e}} = 0.923, 0.946, 0.960, 0.970$ for $d = 5, 6, 7,$ and 8, respectively. Due to these small changes in $V_{\mathrm{e}}$ for increasing $d$ the variance of OPE looks very similar in all graphs.}

{The total variance of DPEPC can be calculated as $\tr\left( \M{\Sigma}_\V{x}\right)$, where the covariance matrix $\M{\Sigma}_\V{x} = \M{B}_{K}^d \M{\Sigma}_{\V{b}_K^d} \left( \M{B}_{K}^d\right)^T$, where $\M{B}_{K}^d$ is from \eqref{eq:B_matrix}, and $\M{\Sigma}_{\V{b}_K^d}$ is the block diagonal covariance matrix of $\V{b}_K^d$. We can see the aforementioned effect in the variance of DPEPC when $K$ is same, e.g., for $d = 6$ and 8. On the other hand, the effect of increase ($d = 5, K = 6$ to $d = 6, K = 12$) or decrease ($d = 6, K = 12$ to $d = 7, K = 8$) in $K$ affects the performance as expected. }


Dashed lines in Figure~\ref{fig:noisyDPT} show the performance of DPEPC and OPE when the initial probe states are subject to depolarizing noise of strength $\kappa$. It can be seen that stronger the noise, earlier the MSE departs from the variance of the estimators and becomes independent of $N$.

\begin{figure}[t!]
	\centering
	\includegraphics[width=0.65\textwidth]{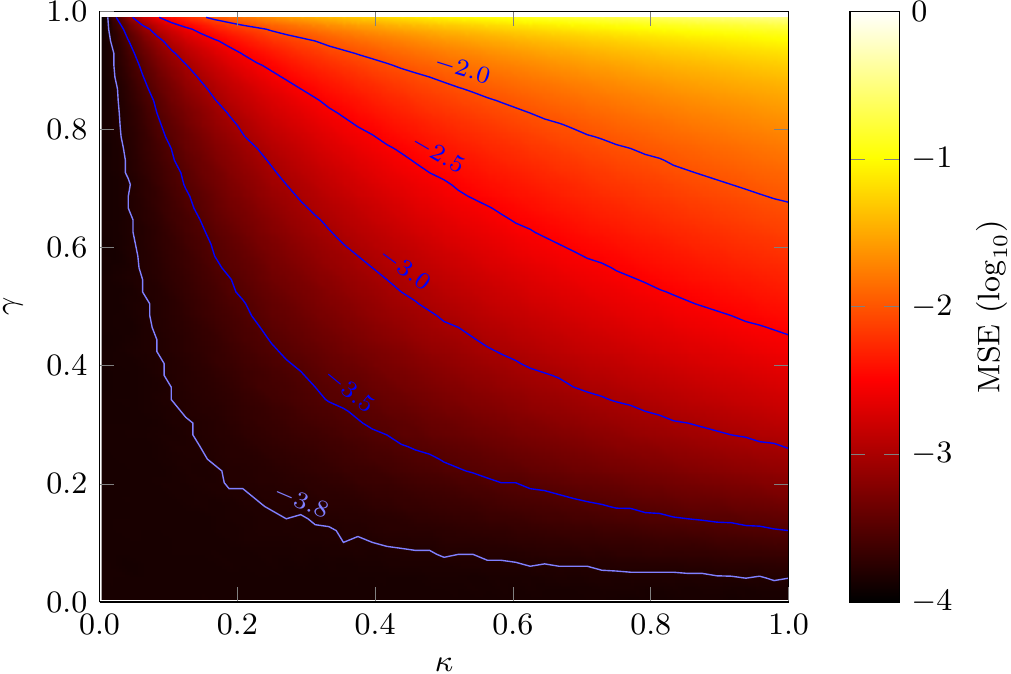}
	\caption{\textbf{Noise tolerance of DPEPC (proposed).} We plot the MSE of DPEPC as a function of noisiness $\gamma$ of the channel under study and the strength $\kappa$ of the depolarizing noise on the probe states for $d = 13, K = 14$, and  $N = 10^5$. The MSE is least affected by the unintended noise of strength $\kappa$ for small values of $\gamma$. }
	\label{fig:gammaEffect}
\end{figure}

It is natural to think that noisier the original DWC whose description we are trying to obtain, higher the tolerance to the depolarizing noise. For example, if the original DWC is completely depolarizing, the MSE will improve with increasing $N$ indefinitely, regardless of the depolarizing noise strength $\kappa$. {Figure~\ref{fig:gammaEffect} confirms this intuition, where we plot the MSE of DPEPC as a function of $\gamma$ and $\kappa$. We recall that $\gamma = 0$ gives a completely depolarizing channel and $\gamma = 1$ is a noiseless channel. From the figure, it is clear that the noise on probes has a minimal effect on the MSE performance for smaller $\lesssim 0.2$ values of $\gamma$. On the other hand, closer the channel under study is to the ideal one, it is more affected by the unintended noise of strength $\kappa$.}

The effect of depolarizing noise on the initial probe states can be interpreted as a noise strength-dependent saturation point on the number of channel uses $N$, such that increasing $N$ does not improve the estimation MSE beyond the saturation point. This saturation behaviour is typical of a biased estimator, i.e., 
\begin{align}
	\EX{\hat{p}_{n,m}} = p_{n,m} + \nu \neq p_{n,m},
	\label{eq:bias}
\end{align}
where \( \nu \neq 0\) is the bias in the estimates. If the strength of noise on the initial probe states is known, it is possible to avoid the saturation behaviour seen in Figure~\ref{fig:noisyDPT} and \ref{fig:gammaEffect} by utilizing the measurement error mitigation framework.   For the depolarizing channel, we can achieve this by simply setting
\begin{align}
\tilde{p}_{n,m} = \frac{1}{1 - \kappa} \left( \hat{p}_{n,m} - \frac{\kappa}{d^2}\right),
\label{eq:correction}
\end{align}
where $\tilde{p}_{n,m}$ is the new (bias mitigated) estimate of $p_{n,m}$. Note that since the depolarizing channel is a special case of the Pauli channel,  \eqref{eq:correction} is a special case of the measurement error mitigation by matrix inversion, which was simplified due to the high symmetry of the depolarizing channel. 

\begin{figure}[t]
	\centering
	\includegraphics[width=0.55\textwidth]{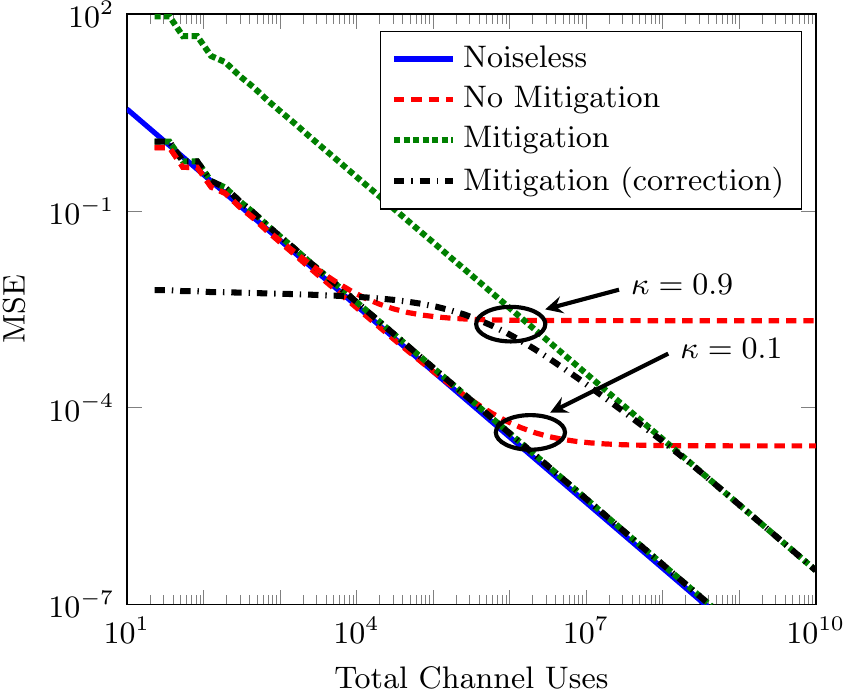}
	\caption{\textbf{Quantum error mitigation for recovering the original scaling.} Effect of noise and the result of error mitigation in DPEPC (proposed) of a DWC with {$d = 27, K = 36$},\, $\gamma = 0.7$, and $\kappa = 0.1$ and $0.9$. Error mitigation successfully recovers the original scaling of noiseless DPEPC, with a constant noise-dependent multiplicative factor. }
	\label{fig:biasMitigation}
\end{figure}

Figure~\ref{fig:biasMitigation} shows the effect of error mitigation in the DPEPC estimates of a DWC in {$d=27$} with $\gamma = 0.7$, and $\kappa = 0.1$ and 0.9. It can be noted that the error mitigation introduces a $\kappa$-dependent multiplicative factor in the scaling against the number of channel uses $N$. This is because for large $\kappa$, contribution from uniform distribution dominates in \eqref{eq:bias}, reducing the information about the original distribution in the measurement outcomes. Since we are utilizing maximum likelihood estimates, it is possible to obtain incorrect channel parameters, i.e., negative elements, and parameter sum not equal to one. The effect of these incorrect parameter ranges is enhanced due to error mitigation. In such cases, we set the negative parameter values to 0 and normalize the error mitigated distribution, i.e., we project the obtained vector on the probability simplex. We call this process correction and plot the MSE performance of error mitigation with both correction and without correction. It can be seen that this correction significantly improves the MSE performance of bias mitigated estimates.

Finally, the main ingredient of our DPEPC for DWCs is Lemma~\ref{Lemma:1}, which has been generalized to the generalized Pauli channels in \cite{SIU:20:JPAMT}. From discussion in \cite{SIU:20:JPAMT}, it is straightforward to generalize the DPEPC for other generalizations of Pauli channels. 

\section{Conclusions} \label{sec:Conc}
We have presented a process tomography/parameter estimation scheme for DWCs, which can be extended to the other definitions of generalized Pauli channels. The proposed method operates with separable probe states, yet provides same MSE scaling against number of channel uses as that of entangled-based parameter estimation scheme. Numerical examples show that the number of measurement configurations $K$ scales linearly with the Hilbert space's dimension. 
We also {showed that the framework of measurement error mitigation can be useful in systems with Pauli noise and Pauli measurements.} In particular, we exemplified the depolarizing noise on the probe state of both DPEPC and OPE and mitigated the consequent errors by utilizing measurement error mitigation. 
Future directions may include analytical results on $K$, the number of measurement configurations required in $\HS_d$. 
Furthermore, the DPEPC and the OPE are clearly the extreme points in terms of utilizing entanglement in parameterized channel estimation task. It needs to be investigated if there exist some intermediate schemes where limited entanglement may be utilized to access the tradeoff between the entanglement and the number of channel uses/required number of experimental configurations.
Another possible future direction is to utilize DPEPC as a first step in identifying \emph{any} unknown given channel and then using the results of DPEPC in a second step to completely identify the unknown channel. This direction can particularly be interesting due to low resource requirements and fast converging behaviour of DPEPC. 


\section*{Acknowledgments}
This work was supported by the National Research Foundation of Korea (NRF) grant funded by the Korea government (MSIT) (No. 2019R1A2C2007037).
\bibliographystyle{unsrtnat}

\pagebreak

\appendix

\setcounter{equation}{0}
\setcounter{figure}{0}
\setcounter{table}{0}
\setcounter{page}{1}
\makeatletter
\newtagform{supplementary}[S]()
\usetagform{supplementary}

\renewcommand{\thefigure}{S\arabic{figure}}
\renewcommand{\bibnumfmt}[1]{[S#1]}
\renewcommand{\thefigure}{S\arabic{figure}}

\section{Diamond Norm Distance Performance}\label{app:DND}
The diamond norm distance is a natural distance between two quantum channels. For the sake of completeness and for  interested readers, here we reproduce all numerical examples figures of main text with diamond norm distance as the performance metric. 

The diamond norm distance between two quantum channels $\mathcal{N}_1$ and $\mathcal{N}_2$ is given by \cite{Kit:97:RMS, WE:16:PRA}
\begin{align}
	\vNorm{\mathcal{N}_1 - \mathcal{N}_2}_{\diamond} 
	= \sup_{\psi} \vNorm{
	\left( \mathcal{N}_1 \otimes \mathcal{I}_d - 	\mathcal{N}_2 \otimes \mathcal{I}_d \right) 
	\left( \psi \right)
	}_1,
\label{eq:DN}
\end{align}
where $\mathcal{I}_d$ is the identity map on $\HS_d$, and $\vNorm{X}_1 = \sqrt{\tr\left( X^{\dagger} X\right)} $. 

The diamond norm distance naturally captures the notion of distinguishability of two quantum channels \cite{BS:10:JPB}. The diamond norm distance can be formulated as a semidefinite convex program and thus can be efficiently computed in the problem size \cite{Wat:09:ToC, BT:10:QIC}. Despite its efficiency of computation as a convex program, it is difficult to employ semidefinite programming in the present manuscript since we have examples where $d$ is as large as 27. Consequently, the problem size of optimization in \eqref{eq:DN} is $d^2 = 729$ which is not easy to solve on a personal computer. Furthermore, for obtaining good quality numerical examples, averaging of several samples on each point is required. Due to these reasons, utilizing convex programming for providing numerical examples of this paper is difficult. 

Fortunately, for the case of Pauli channels, an exact analytical expression for diamond norm distance can be obtained \cite{Sac:05:PRA, BS:10:JPB}. Let $\mathcal{N}_1$ and $\mathcal{N}_2$ be two Pauli channels of arbitrary finite dimension with parameter sets $\left\{ p_{n, m}\right\}$ and $q_{n, m}$, then \cite{Sac:05:PRA, BS:10:JPB}
\begin{align}
	\vNorm{\mathcal{N}_1 - \mathcal{N}_2}_{\diamond} = \sum_n\sum_m \sNorm{p_{n, m} - q_{n, m}}.
	\label{eq:DN_Simple}
\end{align}

We use \eqref{eq:DN_Simple} to calculate the diamond norm distance between the actual and estimated Pauli channels. We replicate all the numerical example figures from the main text, i.e., Figs.~\ref{fig:noisyDPT}, \ref{fig:gammaEffect}, and \ref{fig:biasMitigation} as Figs.~\ref{fig:noisyDPTS}, \ref{fig:gammaEffectS}, and \ref{fig:biasMitigationS}, respectively. These supplementary figures provide the same qualitative insights as that of main text numerical examples except Figure~\ref{fig:noisyDPTS}. 

In Figure~\ref{fig:noisyDPTS}, we note that increase in $d$ has slightly more perceptible difference as compared to what we noticed in Figure~\ref{fig:noisyDPT}. This is because \eqref{eq:DN_Simple} is actually the $\ell_1$ norm distance between the actual and the estimated distribution and is of order $\approx \sum_i\sqrt{p_i \left( 1 - p_i \right)/N}$ for maximum likelihood estimates of the distribution $\left\{ p_i\right\}$ \cite{HJW:15:T_IT}. The term $\sum_i\sqrt{p_i \left( 1 - p_i \right)} = 4.07, 4.93, 5.79,$ and $6.63$ for $d = 5, 6, 7,$ and 8, respectively, for the considered examples of $\gamma = 0.7$. For uniform sampling from the probability simplex, this term is $\approx 4.32, 5.22, 6.12,$ and  7.02 for $d = 5,6,7,$ and 8, respectively. Due to this slightly higher increase in these values for increasing $d$, the dependence on $d$ in Figure~\ref{fig:noisyDPTS} is slightly more perceptible than in the variance in Figure~\ref{fig:noisyDPT} of the main text.

\setcounter{figure}{4}

\begin{figure*}[t]
	\centering
	\subfigure[$d = 5$]{
		\includegraphics[width=0.47\textwidth]{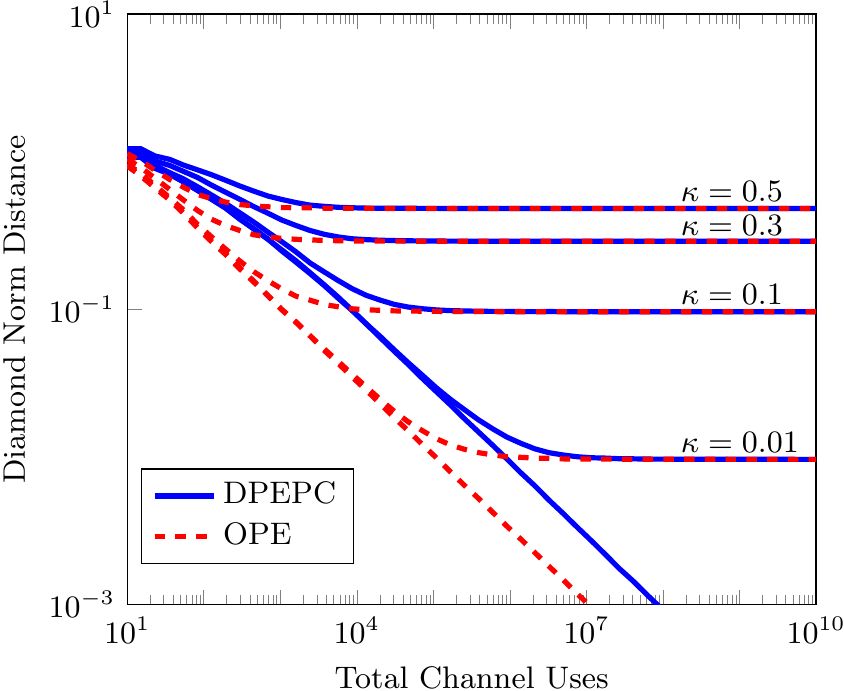}
	}
	\subfigure[$d = 6$]{
		\includegraphics[width=0.47\textwidth]{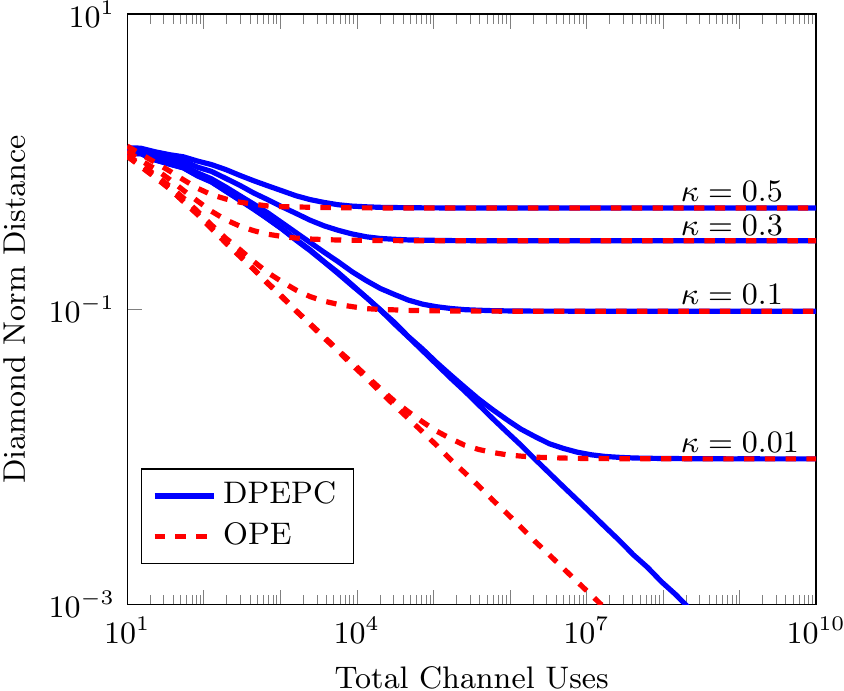}
	}
	\subfigure[$d = 7$]{
		\includegraphics[width=0.47\textwidth]{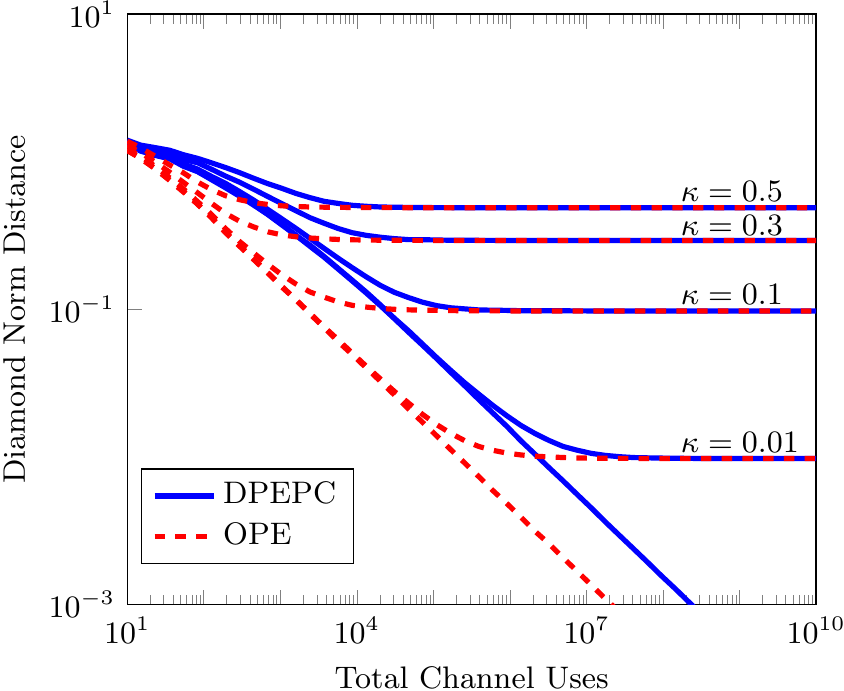}
	}
	\subfigure[$d = 8$]{
		\includegraphics[width=0.47\textwidth]{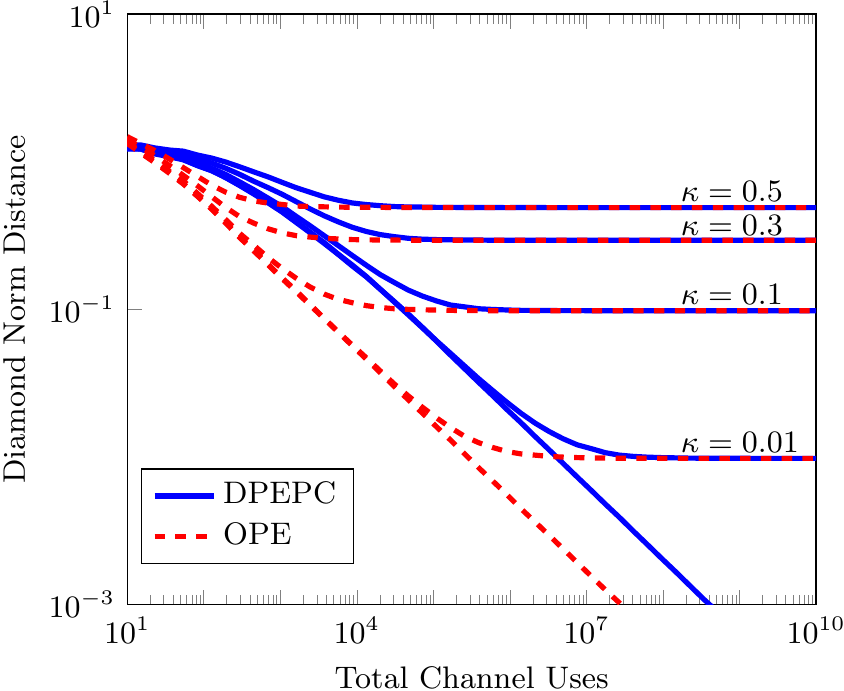}
	}
	\caption{\textbf{Performance comparison of DPEPC (proposed) and that of OPE (\cite{FI:03:JPAMG}).} We plot the number of channel uses vs the estimation accuracy (diamond norm distance) for the DPEPC and the OPE of DWC with $\gamma = 0.7$, (a) $d = 5, K = 6$, (b) $d = 6, K = 12$, (c) $d = 7, K = 8$, and (d) $d = 8, K = 12$, with different values of the noise strength $\kappa$ in the probe states.}
	\label{fig:noisyDPTS}
\end{figure*}

\begin{figure}[ht!]
	\centering
	\includegraphics[width=0.65\textwidth]{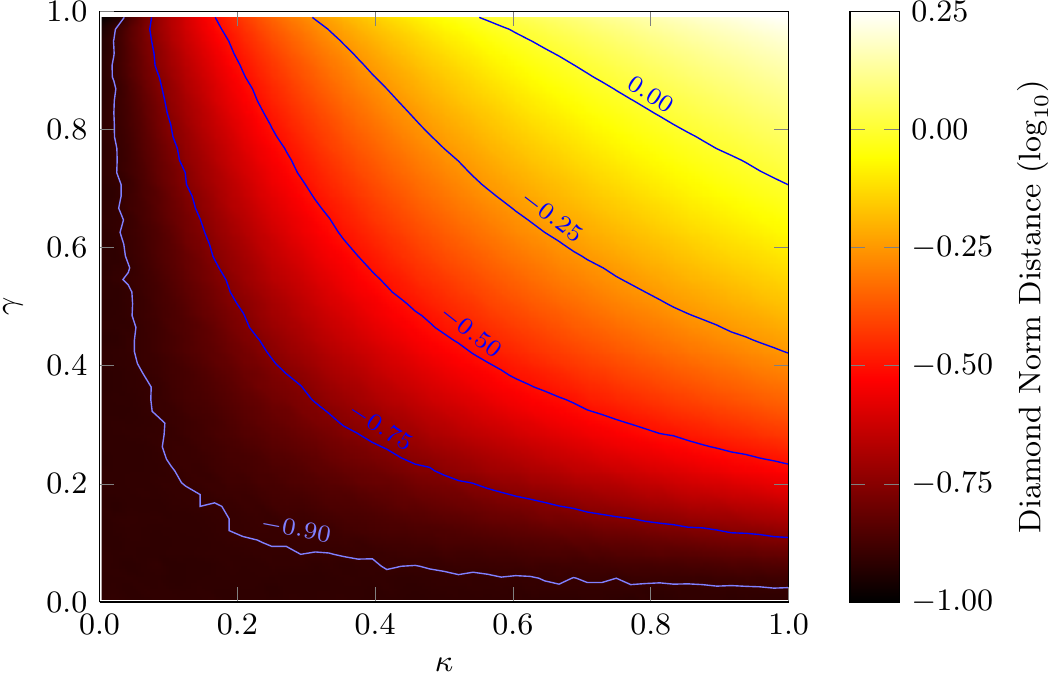}
	\caption{\textbf{Noise tolerance of DPEPC.} We plot the diamond norm distance of estimated via DPEPC and the actual channel as a function of noisiness $\gamma$ of the channel under study and the strength $\kappa$ of the depolarizing noise on the probe states for $d = 13, K = 14$ and $N = 10^5$. The MSE is least affected by the unintended noise of strength $\kappa$ for small calues of $\gamma$}
	\label{fig:gammaEffectS}
\end{figure}

\begin{figure}[ht!]
	\centering
	\includegraphics[width=0.55\textwidth]{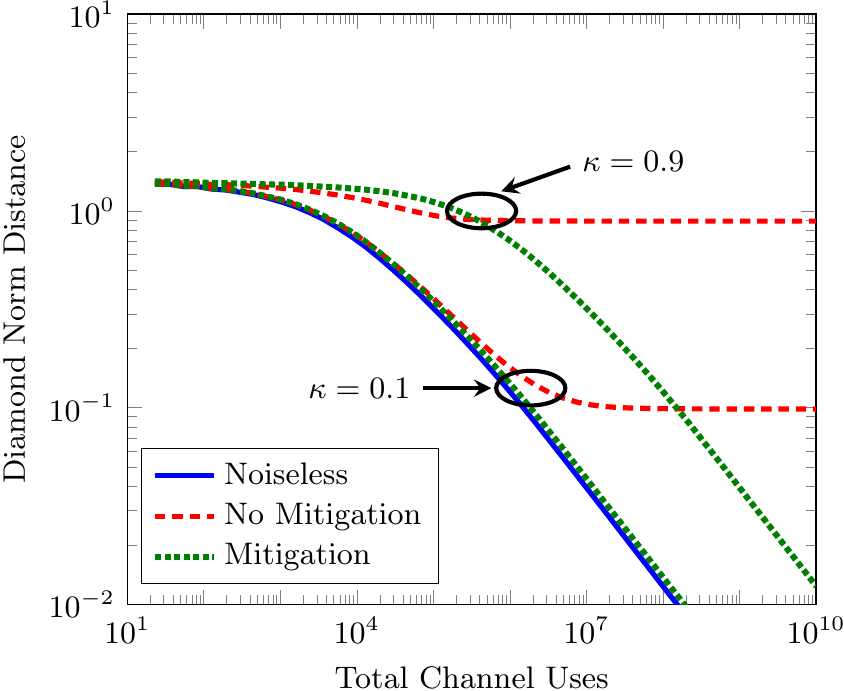}
	\caption{\textbf{Quantum error mitigation for recovering the original scaling.} Effect of noise and the result of error mitigation in DPEPC of a DWC with $d = 27$,\, $\gamma = 0.7$, and $\kappa = 0.1$ and $0.9$. Error mitigation successfully recovers the original scaling of noiseless DPEPC, with a constant noise-dependent multiplicative factor. }
	\label{fig:biasMitigationS}
\end{figure}

\end{document}